\documentclass[letterpaper, 10 pt, conference]{ieeeconf}  

\IEEEoverridecommandlockouts                              
\usepackage{cite}
\usepackage{amsmath,amssymb,amsfonts,bbm}
\usepackage{algorithmic}
\usepackage{graphicx}
\usepackage{textcomp}
\usepackage{setspace}
\usepackage{booktabs}

\usepackage{epsfig}
\usepackage{times} 
\usepackage{svg}
\usepackage[linesnumbered,ruled,vlined]{algorithm2e} 

\usepackage[nolist, nohyperlinks]{acronym}
\usepackage[table]{xcolor} 
\usepackage{makecell}

\usepackage{amsthm}

\newcommand{\reals}{\mathbb{R}}
\newcommand{\R}{\reals}

\newcommand{\xbf}{\mathbf x}
\newcommand{\ubf}{\mathbf u}

\newcommand{\Qbf}{\mathbf Q}
\newcommand{\Vbf}{\mathbf V}

\newcommand{\Pbb}{\mathbb P}
\newcommand{\Ebb}{\mathbb E}
\newcommand{\Zbb}{\mathbb Z}

\newcommand{\Bcal}{\mathcal{B}}
\newcommand{\Ccal}{\mathcal{C}}

\newcommand{\Ecal}{\mathcal{E}}
\newcommand{\Fcal}{\mathcal{F}}
\newcommand{\Gcal}{\mathcal{G}}

\newcommand{\Kcal}{\mathcal{K}}
\newcommand{\Lcal}{\mathcal{L}}

\newcommand{\Ncal}{\mathcal{N}}

\newcommand{\Pcal}{\mathcal{P}}

\newcommand{\Scal}{\mathcal{S}}

\newcommand{\Ucal}{\mathcal{U}}
\newcommand{\Vcal}{\mathcal{V}}

\newcommand{\Xcal}{\mathcal{X}}

\newcommand{\eqn}[1]{\begin{align} #1 \end{align}}
\newcommand{\eqnN}[1]{\begin{align*} #1 \end{align*}}

\newcommand{\abs}[1]{\left | #1 \right |}

\theoremstyle{plain}
\newtheorem{theorem}{Theorem}

\newtheorem{lemma}{Lemma}
\newtheorem{problem}{Problem}
\newtheorem{definition}{Definition}
\newtheorem{prop}{Proposition}

\theoremstyle{definition}
\newtheorem{assumption}{Assumption}
\newtheorem{remark}{Remark}

\theoremstyle{remark}

\makeatletter
\let\NAT@parse\undefined
\makeatother
\usepackage{hyperref}
\hypersetup{
colorlinks, 
    linkcolor={red!50!black},
    citecolor={blue!50!black},
    urlcolor={blue!80!black}
}
\usepackage{cleveref}

\crefformat{problem}{Problem~#2#1#3}
\crefformat{assumption}{Assumption~#2#1#3}
\crefformat{prop}{Proposition~#2#1#3}
\usepackage[nolist,nohyperlinks]{acronym}

\title{\LARGE \bf
Fully Byzantine-Resilient Distributed Multi-Agent Q-Learning
}

\author{Haejoon Lee and Dimitra Panagou
\thanks{*This work is supported by the Air Force Office of Scientific Research (AFOSR) under FA9550-23-1-0163}
\thanks{All authors are with the Department of Robotics,
        University of Michigan, Ann Arbor, MI, USA
        {\tt\small \{haejoonl, dpanagou\}@umich.edu}}%
}

\begin{document}

\maketitle
\thispagestyle{empty}
\pagestyle{empty}

\begin{abstract}
We study Byzantine-resilient distributed multi-agent reinforcement learning (MARL), where agents must collaboratively learn optimal value functions over a compromised communication network. Existing resilient MARL approaches typically guarantee almost sure convergence only to near-optimal value functions, or require restrictive assumptions to ensure convergence to optimal solution. As a result, agents may fail to learn the optimal policies under these methods. To address this, we propose a novel distributed Q-learning algorithm, under which all agents’ value functions converge almost surely to the optimal value functions despite Byzantine edge attacks. The key idea is a redundancy-based filtering mechanism that leverages two-hop neighbor information to validate incoming messages, while preserving bidirectional information flow. We then introduce a new topological condition for the convergence of our algorithm, present a systematic method to construct such networks, and prove that this condition can be verified in polynomial time. We validate our results through simulations, showing that our method converges to the optimal solutions, whereas prior methods fail under Byzantine edge attacks.
\end{abstract}

\section{Introduction}
Multi-agent reinforcement learning (MARL) has emerged as a powerful extension of reinforcement learning (RL)~\cite{sutton1998reinforcement} for learning optimal policies of multiple decision-makers interacting within a shared environment~\cite{zhang2021multi}. Within this field, distributed cooperative MARL has gained significant attention. In these scenarios, multiple agents aim to optimize a shared global objective --- often expressed as the average of local rewards —-- through information sharing with their neighbors in a network~\cite{zhang2018fully, lin2021multi, qu2020scalable}.

 One of the challenges in these decentralized frameworks is that agents receive local rewards from the environment, making the computation of optimal global value functions impossible without information sharing. To this end, a variety of frameworks have been developed. A consensus-based distributed Q-learning (QD-learning) has been proposed in~\cite{kar2013QD}, enabling agents to asymptotically compute the average of their optimal value functions through exchanges of state-action value estimates. Subsequent work introduced distributed actor-critic frameworks using linear function approximations~\cite{zhang2018fully,zeng2022learning}. To ensure scalability,~\cite{lin2021multi, qu2020scalable} studied scalable actor-critic algorithms where agents maintain state-action information only for their multi-hop neighbors.


Despite these merits, distributed MARL algorithms, just like other distributed algorithms, are highly vulnerable to adversarial attacks that corrupt or share faulty information during their training. In the distributed systems literature, the Byzantine model defines an omniscient adversary capable of injecting arbitrary errors or disruptions via hardware, software, and communication compromises~\cite{su2021byzantine, leblanc2013resilient}. Thus, many resilient algorithms have been studied to counter or contain the impacts of Byzantine agents in distributed consensus~\cite{leblanc2013resilient,diaji2019resilient,yuan2025resilient, ACC2026, CDC2025}, optimization~\cite{sundaram2019distributed, su2021byzantine, yemini2025resilient, zhai2025byzantine}, and learning frameworks~\cite{fang2022bridge, chen2017distributed, lin2024sf, blanchard2017machine}. Similarly, Byzantine-resilient distributed MARL has been studied in recent years. Early studies showed that even a single adversarial agent can severely disrupt the learning process in cooperative MARL algorithms~\cite{figura2021adversarial, xie2021towards}. It has been established in~\cite{hairi2024onthehardness} that exact evaluation of the honest agents' average reward is generally impossible under Byzantine attacks.

To address this vulnerability, several resilient MARL algorithms have been proposed. The QD-learning algorithm from~\cite{kar2013QD} was adapted in~\cite{xie2021towards,xie2023communication} to ensure convergence despite Byzantine agents. Other works have utilized linear function approximation to learn Q-values with adversarial attacks~\cite{wu2021byzantine_journal, yao2024communication_efficient, ye2024resilient}. However, these methods generally only guarantee convergence to near-optimal value functions, and therefore cannot guarantee learning of the true optimal policies. Furthermore, many of these approaches require the communication network to satisfy robustness conditions (e.g., $(2F+1)$-robustness). Since verifying such properties is a co-NP-complete problem~\cite{zhang2015notion}, the application of these approaches to large-scale networks is limited. While~\cite{lin2020toward,lin2024robust,fang2025provably} proposed an alternative algorithm that avoids such robustness conditions, they require a trusted central coordinator. 




In this work, we propose a fully resilient distributed Q-learning algorithm under which agents achieve almost sure convergence to \textit{optimal value functions} in the presence of Byzantine edge attack, a restricted variant of the Byzantine model, in a decentralized network. By focusing on edge-level adversaries only, we provide stronger learning guarantees than prior work~\cite{xie2023communication, xie2021towards, zhai2025byzantine}, which achieves only near-optimal convergence or requires additional structural assumptions on Q-values or local objectives. The key feature of our method is that it relies on redundant two-hop information to verify incoming messages and filter adversarial injections. We prove that under a novel topological condition, this approach ensures almost sure convergence to the optimal solution without any structural assumptions on the problem.



\textit{Contributions: } This paper introduces Fully Resilient Distributed Q-learning (FRQD-learning), a decentralized Q-learning algorithm that enables a network of agents to achieve almost sure asymptotic convergence to the optimal value functions despite Byzantine edge attacks. Specifically,
\begin{itemize}
    \item We integrate consensus-based distributed Q-learning (QD-learning)~\cite{kar2013QD} with a novel filtering mechanism that validates incoming information using two-hop messages, resulting in a Byzantine-resilient algorithm. We show that under a novel topological condition --- termed $(r,r')$-redundancy --- our method guarantees almost sure convergence to the optimal value functions.
    \item We present systematic constructions of $(r,r')$-redundant graphs and prove $(r,r')$-redundancy can be verified in polynomial-time, unlike the co-NP-complete $r$-robustness conditions in~\cite{xie2021towards, yao2024communication_efficient}.
    \item We validate our results through simulations, showing that our method converges to the optimal solutions, whereas prior methods fail to do so under the same Byzantine edge attacks.
\end{itemize}

\section{Problem Statement}
A multiset $\Scal$ is a collection that may contain repeated elements. For simplicity, we abuse notation and use calligraphic symbols to denote both sets and multisets. We denote the cardinality of a (multi)set $\Scal$ as $|\Scal|$. We denote the sets of non-negative and positive integers as $\mathbb Z_{\geq 0}$ and $\mathbb Z_{>0}$. For a multiset $\Scal$ and $q \in \R$, we define ${\rm count}(q, \Scal)$ as the number of occurrences of $q$ in $\Scal$. We denote the probability and expectation of a probability space $(\Omega, \Fcal, \Pbb)$ by $\Pbb(\cdot)$ and $\Ebb(\cdot)$.

We consider a system of $n$ agents interacting over a simple, undirected, and time-varying communication graph $\Gcal(t)=(\Vcal, \Ecal(t))$. The vertex set $\Vcal=\{1,\dots, n\}$ represents the agents, and the edge set $\Ecal(t)\subset \Vcal \times \Vcal$ denotes a set of communication links between agents at time $t$. Since the graph is undirected, $(i,j)\in \Ecal(t)$ implies that $(j,i)\in \Ecal(t)$. For agent $i$, its one-hop and two-hop neighbor sets at time $t$ are denoted by $
\Ncal_i(t)=\{j\in \Vcal \mid (i,j)\in \Ecal(t)\}$ and $
\Ncal_i^{(2)}(t)=\{j^{(2)}\in \Vcal \mid j^{(2)}\in \Ncal_j(t)\setminus \{i\}, j\in \Ncal_i(t)\}$. The extended one-hop neighbor set is $\Bcal_i(t) = \Ncal_i(t)\cup \{i\}$. A path of length $k\in\Zbb_{>0}$ is a sequence of vertices $(v_0, \dots, v_k)$ such that $(v_{i-1}, v_i)\in \Ecal(t)$ $\forall i\in\{1,\dots,k\}$. A graph is connected if there exists a path of any length between any pair of nodes.

The environment is modeled as a networked multi-agent Markov Decision Process (MDP) defined by a tuple $(\Xcal, \Ucal, \Pbb, \{c^i\}_{i=1}^n, \gamma)$. Here, $\Xcal=\{1,\dots, M\}$ is the finite state space, and $\Ucal$ is the finite set of control actions. The transition function $\Pbb:\Xcal \times \Ucal \times \Xcal \to [0,1]$ is given by 
\eqnN{
\Pbb(\xbf_{t+1}=x' \mid \xbf_t=x, \ubf_t = u)= p_{xx'}^{u}, \forall x,x' \in \Xcal, u \in \Ucal,
}
where $\sum_{x'\in \Xcal} p_{xx'}^{u}=1$ for all $x \in \Xcal$ and $u\in \Ucal$. Each agent $i$ has a private random local cost function $c^i(x,u)$, which represents the cost incurred when action $u$ is taken at state $x$. The global cost is denoted as $c(x, u)=(1/n)\sum_{i\in \Vcal} c^i(x,u)$ for all $x\in \Xcal$ and $u\in \Ucal$. The constant $\gamma\in(0,1)$ is the discount factor. 

For a stationary control policy $\pi:\Xcal\to \Ucal$, where $\ubf_t=\pi(\xbf_t)$, and initial state $x$, the state process $\{\xbf_t^{\pi}\}$ evolves as a homogeneous Markov chain with transitions $p_{xx'}^{\pi(x)}$.
In RL, the goal is to learn a policy $\pi:\Xcal \to \Ucal$ that minimizes the infinite-horizon discounted global cost
\eqn{
V_{x,\pi}=\Ebb\left[
\sum_{t=0}^{\infty} \gamma^t c(\xbf_t,\pi(\xbf_t))
\;\middle|\;
\xbf_0=x
\right].
}

One of the most widely used algorithms is Q-learning algorithm~\cite{sutton1998reinforcement}, which estimates the optimal state-action value functions (or Q-values) $\Qbf^*=[Q^*_{x,u}] \in \R^{|\Xcal \times \Ucal|}$. These values are unique solutions to the Bellman optimality equation
\eqnN{
Q^*_{x,u}
=
\Ebb[c(x,u)]
+
\gamma
\sum_{x'\in\mathcal X}
p_{xx'}^{u}
\min_{v\in\mathcal U}
Q^*_{x',v},
}
where $V^*_x= \min_{u\in\mathcal U} Q^*_{x,u}$ is the optimal value function for state $x\in \Xcal$. We denote the vector of optimal value functions and optimal policies as
\eqn{\label{eq:optimal_value}
 \Vbf^* = \begin{bmatrix}
    V^*_1 & \cdots & V^*_M
\end{bmatrix}^\top,
 \ \
\pi^*(x)\in\arg\min_{u\in\mathcal U} Q^*_{x,u}.
}

In the decentralized setting, while each agent $i\in \Vcal$ observes the state $x \in \mathcal{X}$, action $u \in \Ucal$, and its local cost $c^i(x,u)$, it does not know the costs of others. Hence, agents cannot compute the global cost $c(x, u)=(1/n)\sum_{i\in \Vcal} c^i(x,u)$ and the optimal value functions
\eqn{\label{eq:individual_optimal_value}
V^*_x = \inf_{\pi}\frac 1 n \sum_{i \in \Vcal} V^i_{x,\pi}, \ \forall x \in \Xcal,}
where $V^i_{x,\pi}$ denotes the discounted cost of agent $i$ under policy $\pi$:
\eqn{V_{x,\pi}^i = \Ebb\left[ \sum_{t=0}^\infty\gamma^t c^i(\xbf_t, \pi(\xbf_t))\mid \xbf_0=x\right].}

In response,~\cite{kar2013QD} presented the QD-learning algorithm, a distributed variant of Q-learning, that allows each agent to learn the optimal value functions and policies~\eqref{eq:optimal_value} by exchanging its local Q-values. However, the QD-learning algorithm fails in the presence of Byzantine agents --- those who represent a broad range of possible faults or adversarial attacks on software, hardware, and communication --- as they may deviate arbitrarily from prescribed protocols~\cite{leblanc2013resilient,su2021byzantine}.

Although resilient Q-learning variants~\cite{lin2020toward, xie2023communication, ye2024resilient} have been proposed recently, they only converge to near-optimal value functions in general. In fact, finding the exact optimal value functions in the presence of Byzantine agents is generally impossible~\cite{hairi2024onthehardness}, making the learning of optimal policies a fundamental challenge. Hence, we focus on a slightly weaker but highly practical adversarial model:

\begin{definition}[$F$-total Byzantine Edge Attack] Let there be $C\in\Zbb_{>0}$ rounds of communications among agents at time $t\in \Zbb_{\geq 0}$. An edge $(i,j)\in \Ecal_{\Bcal}^z(t) \subset \Ecal(t)$ is said to be under a \textbf{Byzantine edge attack} during the communication round $z\in\{1,\dots, C\}$ at time $t$ if the message sent by agent $i$ at round $z$ is altered arbitrarily or dropped before it is received by agent $j$ (and/or vice versa). The network $\Gcal(t)=(\Vcal, \Ecal(t))$ is said to be under \textbf{$F$-total Byzantine edge attack} if $|\Ecal_{\Bcal}^z(t)|\leq F$ for all $z\in\{1,\dots, C\}$ and $t\in \mathbb Z$.
\end{definition}
By definition, we only consider \textit{unreliability in communications,} unlike many works that focus on scenarios where the nodes themselves are unreliable. Formally, we assume
\begin{assumption}
    All agents $i\in \Vcal$ are cooperative, e.g., they follow the same prescribed protocol. \label{assum:cooperative}
\end{assumption}

We assume that Byzantine edge attackers have limited resources and can compromise only $F$ communications \textit{per round of communications}. For any attacked edge $(i,j)$ during round $z\in\{1,\dots, C\}$ at time $t$, both agents $i$ and $j$ may receive corrupted information, affecting at most $2F$ agents each communication round. This attack type is similar to those studied in~\cite{bhushan2017man,lei2026distributed}, and by~\cite[Prop. 6]{xie2023communication}, this attack can prevent agents from finding the optimal value functions~\eqref{eq:optimal_value}. Therefore, our problem of interest is:
\begin{problem}
\label{prob:problem}
Design a fully resilient distributed algorithm such that, under Assumption \ref{assum:cooperative} and an $F$-total Byzantine edge attack, every agent $i \in \Vcal$ finds the optimal value functions $\Vbf^*$~\eqref{eq:optimal_value} using only local interactions in a time-varying communication graph $\Gcal(t)=(\Vcal, \Ecal(t))$.
\end{problem}


We emphasize that by adopting a slightly weaker adversarial model than prior works~\cite{xie2023communication, ye2024resilient, wu2021byzantine_journal}, we obtain strictly stronger performance guarantees, namely, almost sure convergence to the exact optimal value functions, under $F$-total Byzantine edge attacks. To this end, we propose the Fully Resilient QD (FRQD)-learning algorithm (\Cref{alg:fully_resilient}), which leverages two-hop information to validate messages through redundancy. We show that our method solves~\Cref{prob:problem} under a novel topological condition (which is given in~\Cref{thm:main}).

\section{Fully Resilient $QD$-Learning}

Before presenting our solution, we first present the QD-learning algorithm~\cite{kar2013QD}. Let each agent $i\in \Vcal$ maintains a sequence of state-action value functions $\{\Qbf^i(t)\}\in \R^{|\Xcal \times \Ucal|}$ with components $Q^i_{x,u}(t)$ for every possible state-action pair $(x,u)$. Each agent also maintains a sequence of state value functions $\{\Vbf^i(t)\}\in \R^{M}$ with components $V^i_x(t) = \min_{u\in \Ucal} Q^i_{x,u}(t)$ for $x\in \Xcal$. The sequence $\{Q^i_{x,u}(t)\}$ evolves in the following rule:
\eqn{\label{eq:q_update}
Q^i_{x,u}(t+1)=Q^i_{x,u}(t) -\beta_{x,u}(t)\sum_{Q^j\in \Pcal^i_N(t)} \left( Q^i_{x,u}(t)-Q^j\right)  \\
+ \alpha_{x,u}(t) \left( c^i(\xbf_t,\ubf_t) + \gamma \min_{v\in \Ucal} Q^i_{\xbf_{t+1},v}(t) - Q^i_{x,u}(t)\right)\notag,
}
with the multiset $\Pcal^i_N(t) =\{Q^j_{x,u}(t) \mid j \in \Ncal_i(t)\}$. The weights are defined as
\eqn{
\label{eq:alphas}
\alpha_{x,u}(t) = 
\begin{cases} 
\frac {a}{(k+1)^{\tau_1}} & \text{if } t = T_{x,u}(k) \text{ for some } k \ge 0, \\ 
0 & \text{otherwise,} 
\end{cases}
}

\eqn{
\label{eq:betas}
\beta_{x,u}(t) = 
\begin{cases} 
\frac {b}{(k+1)^{\tau_2}} & \text{if } t = T_{x,u}(k) \text{ for some } k \ge 0, \\ 
0 & \text{otherwise,} 
\end{cases}
}
where $T_{x,u}(k)$ denotes the $(k+1)$-th sampling instant of state-action pairs $(x, u)$. We define $a,b>0$, $\tau_1\in(1/2,1]$, and $0<\tau_2<\tau_1-1/(2+\epsilon_1)$, where $\epsilon_1>0$ is a small constant governing the moment conditions of the local costs (see~\Cref{assum:MDP}).

It was shown in~\cite[Thm. 1]{kar2013QD} that all agents $i$ achieve $\Qbf^{i}(t) \to \Qbf^*$ and $ \Vbf^{i}(t) \to \Vbf^*$ almost surely as $t \to \infty$ under the protocol~\eqref{eq:q_update} and assumptions~\cite[M.1-M.5]{kar2013QD}. We list two of the assumptions below:

\begin{assumption}
\label{assum:MDP}
    The probability space $(\Omega, \Fcal, \Pbb)$ is a complete probability space with filtration $\{\Fcal_t\}$ given by $\Fcal_t=\sigma(\{\xbf_{\tau}, \ubf_{\tau}\}_{\tau \leq t}, \{c^i(\xbf_{\tau}, \ubf_{\tau})\}_{i\in \Vcal, \tau \leq t})$, where $\sigma(\cdot)$ denotes the smallest $\sigma$-algebra with respect to where the random objects are measurable. The conditional probability for the controlled transition of $\{\xbf_t\}$ is $\Pbb(\xbf_{t+1}=x' \mid \Fcal_t)=p^{\ubf_t}_{\xbf_t x'}$. For all $i\in \Vcal$, $\Ebb[c^i(\xbf_t,\ubf_t)\mid \Fcal_t]=\Ebb[c^i(\xbf_t, \ubf_t)\mid \xbf_t, \ubf_t]$, which is equal to $\Ebb[c^i(x, u)]$ on the event $\{\xbf_t=x, \ubf_t=u\}$. Lastly, one-stage costs possess super-quadratic moments, i.e., $\Ebb[(c^i(x,u))^{2+\epsilon_1}]<\infty$ for all $i\in \Vcal$, $x\in \Xcal$, and $u\in \Ucal$.
\end{assumption}

\begin{assumption}
\label{assum:infinite_often}
    For all state-action pair $(x,u)\in\Xcal \times \Ucal$ and $k\in \Zbb_{\geq 0}$, $\Pbb(T_{x,u}(k)<\infty)=1$.  
\end{assumption}

Note Assumptions~\ref{assum:MDP} and~\ref{assum:infinite_often} are equivalent to~\cite[M.1 and M.4]{kar2013QD}. The former implies that the system is a controlled MDP evolving on a complete probability space, where all agents observe the history of past states, controls, and costs up to time $t$. The requirement for super-quadratic moments on costs is used for the almost sure convergence of our algorithm. The latter requires all state-action pairs to be visited infinitely often, which is standard for Q-learning. 

 While Byzantine resilient variants of QD-learning have been proposed~\cite{xie2023communication, ye2024resilient, wu2021byzantine_journal}, these methods in general only guarantee convergence to the neighborhood of the optimal value functions. This is because the methods rely on each agent $i$ modifying $\Pcal^i_N(t)$ in~\eqref{eq:q_update} by filtering extreme neighbor values. Such filtering can induce asymmetric information exchange, effectively resulting in a directed communication network and introducing biases into the update. Under such situation, convergence to the optimal value functions is generally not guaranteed (cf.~\cite[Remark 3]{xie2021towards}).

\subsection{Fully Resilient QD (FRQD)-Learning}
 
 The main novelty of our work lies in how each agent determines $\Pcal^i_N(t)$ in~\eqref{eq:q_update} such that the \textit{actual induced communication remains undirected}. This allows us to recover the original QD-learning method of~\cite{kar2013QD} over a different network. Unlike traditional resilient methods that filter extreme values, which inherently breaks the symmetry of information flow, our method utilizes two-hop redundancy-based filtering.

 From a local perspective, agents cannot directly identify compromised messages from one-hop neighbors. Therefore, our approach leverages the fact that agent $i$ at time $t$ receives messages from the same two-hop neighbor $k \in \Ncal_i^{(2)}(t)$ through multiple independent one-hop neighbors (paths). By comparing these redundantly relayed messages, agents can detect and filter out those corrupted by an $F$-total Byzantine edge attack before incorporating them into the update.
\begin{algorithm}
\SetKwInOut{Inputs}{Inputs}

\caption{Fully Resilient QD (FRQD)-Learning}
\label{alg:fully_resilient}
\Inputs{Parameter $F$, initial $\Qbf_0^i, \Vbf_0^i$}
      \tcp{Receive data}
    Receive state $\xbf_t$, action $\ubf_t$, and cost $c^i(\xbf_t, \ubf_t)$
    
   Initialize multisets $\Kcal^i(t) = \emptyset$, $\Ccal^i(t)=\emptyset$, $\Pcal^i(t)=\emptyset$
    
    \tcp{Exchange Q-values with neighbors:}
    Send $(Q^i_{\xbf_t,\ubf_t}(t),i)$ to and receive $(Q^{j}_{\xbf_t,\ubf_t}(t), j)$ from neighbors

    \tcp{First Filtering:}
    \For{every received $( Q^{j}_{\xbf_t,\ubf_t}(t), j)$}{
    \If{$j\neq i$ and $j$ appears exactly once among all messages}{
     $\Kcal^i(t) = \Kcal^i(t) \cup\{(Q^{j}_{\xbf_t,\ubf_t}(t), j)\}$
     }
     }
     \tcp{Relay Q-values of its neighbors:}
     Share $\Kcal^i(t)$ with neighbors and collect all received sets into $\Ccal^i(t)$ only if $\Kcal^i(t)$ contains set of unique indices
    
   \tcp{Second Filtering:}
    \For{$k = 1$ to $n$}{
        $\Lcal^i_k(t) =\{Q^k_{\xbf_t,\ubf_t}(t) \mid (Q^k_{\xbf_t,\ubf_t}(t),k)\in {\Ccal^i(t)}\}$
        
        \For{each distinct $q \in {\Lcal^i_k(t)}$}{
            \If{${\rm count}(q,{\Lcal^i_k(t)}) \ge 3F+1$}{
                ${\Pcal^i(t)} = {\Pcal^i(t)} \cup \{q\}$
            }
            }
        }
        
   \tcp{Update:}
    Update $Q^i_{\xbf_t,\ubf_t}(t+1)$ using~\eqref{eq:q_update} with $\Pcal^i_N(t)=\Pcal^i(t)$
    
     Update $V^i_{\xbf_t}(t+1) = \min_{u\in \Ucal} Q^i_{\xbf_t, u}(t+1)$

\end{algorithm}




In Fully Resilient QD (FRQD)-learning algorithm~(\Cref{alg:fully_resilient}), there are six main steps with two rounds of communications. At line 1, agent $i\in\Vcal$ first receives the data $\xbf_t$, $\ubf_t$, and $c^i(\xbf_t,\ubf_t)$. At line 3, agent $i$ broadcasts its Q-value $Q^i_{\xbf_t,\ubf_t}(t)$ together with its index $i$ as a tuple $(Q^i_{\xbf_t,\ubf_t}(t),i)$ to its neighbors $j\in \Ncal_i(t)$ while receiving $(Q^{j}_{\xbf_t,\ubf_t}(t),j)$ from them. During the first filtering step (lines 4–6), agent $i$ discards any tuple whose index is equal to $i$ or appears more than once among the received messages. Thus, only tuples corresponding to indices that appear exactly once are retained in $\Kcal^i(t)$. The agent then shares $\Kcal^i(t)$ with its neighbors, allowing the propagation of Q-values from two-hop neighbors (line 7). Also, by lines 4-6, any shared $\Kcal^j(t)$ with duplicate agent indices must be compromised and thus not considered. In the second filtering step (lines 8–12), agent $i$ examines all relayed Q-values associated with agent $k\in \Vcal$; it considers a value trustworthy if it appears at least $3F+1$ times and puts them into the multiset $\Pcal^i(t)$. Finally, the values in $\Pcal^i(t)$ are used to update $Q^i_{\xbf_t,\ubf_t}(t)$ according to~\eqref{eq:q_update}, after which the value function is updated (lines 13–14).

\begin{remark}
The FRQD-learning algorithm requires agents relaying their own and their neighbors' Q-values over two communication rounds. In the first round, each agent sends one scalar to each neighbor ($O(|\Ncal_i(t)|)$ communication). In the second round, each agent relays up to $|\Ncal_i(t)|$ scalars received from its neighbors and its own, resulting in $O(|\Ncal_i(t)|^2)$ communication at worst. Reducing this overhead while preserving the same convergence guarantees (\Cref{thm:main}) remains a future work. \hfill $\bullet$
\end{remark}

\begin{remark}
Our algorithm uses multi-path redundancy for \emph{validation} and directly incorporates validated two-hop messages into the update. This differs from works such as~\cite{yuan2025resilient,hajicotis2025trustworthy}, which use two-hop messages only to \emph{detect and isolate} malicious agents from updates. Moreover, our approach provides resilience against point-to-point Byzantine communication attacks, whereas these works consider adversaries that only broadcast the same values to all neighbors. \hfill $\bullet$
\end{remark}

\subsection{$(r,r')$-redundancy}

Now we present novel topological properties under which FRQD-learning algorithm solves~\Cref{prob:problem}. First, we define:
\begin{definition}[$r$-2-hop Graph]
Let $\Gcal(t) = (\Vcal, \Ecal(t))$ be an undirected graph at time $t$. We define the \textbf{$r$-2-hop graph of $\Gcal(t)$}, denoted as $\Gcal_r^{(2)}(t) = (\Vcal, \Ecal_r^{(2)}(t))$, such that an edge $(i, j) \in \Ecal_r^{(2)}(t)$ if $|\Bcal_i(t)\cap \Ncal_j(t)|\geq r$.
\end{definition}

An illustrative example of an $r$-2-hop graph is given in~\Cref{fig:illustrative_example}. An $r$-2-hop graph contains an edge $(i,j)$ if agents $i$ and $j$ share at least $r$ neighbors (including themselves). Equivalently, there are at least $r$ vertex-disjoint paths of length at most 2 connecting them. 
\begin{figure}
    \centering
\includegraphics[width=0.9\linewidth]{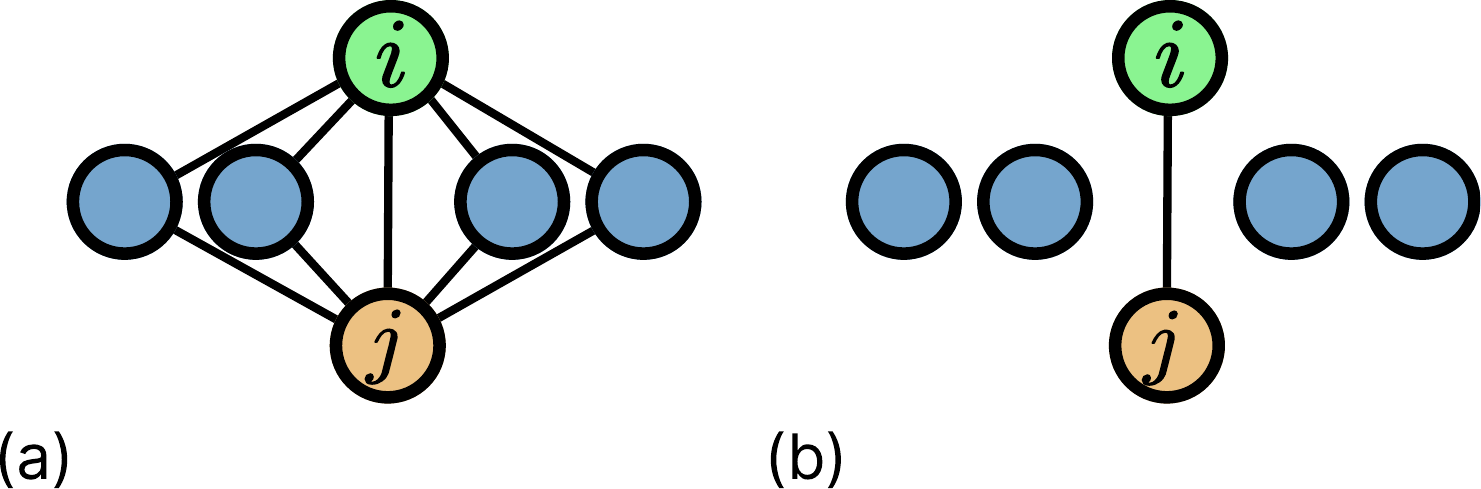}
    \caption{Visualizations of (a) graph $\Gcal(t)=(\Vcal, \Ecal(t))$ and (b) its $5$-2-hop graph $\Gcal^{(2)}_5(t)=(\Vcal, \Ecal^{(2)}_5(t))$ at time $t$. The edge $(i,j)\in \Ecal^{(2)}_5(t)$, since $|\Bcal_i(t)\cap \Ncal_{j}(t)|\geq 5$.}
    \label{fig:illustrative_example}
\end{figure}
\begin{figure}
    \centering
\includegraphics[width=0.9\linewidth]{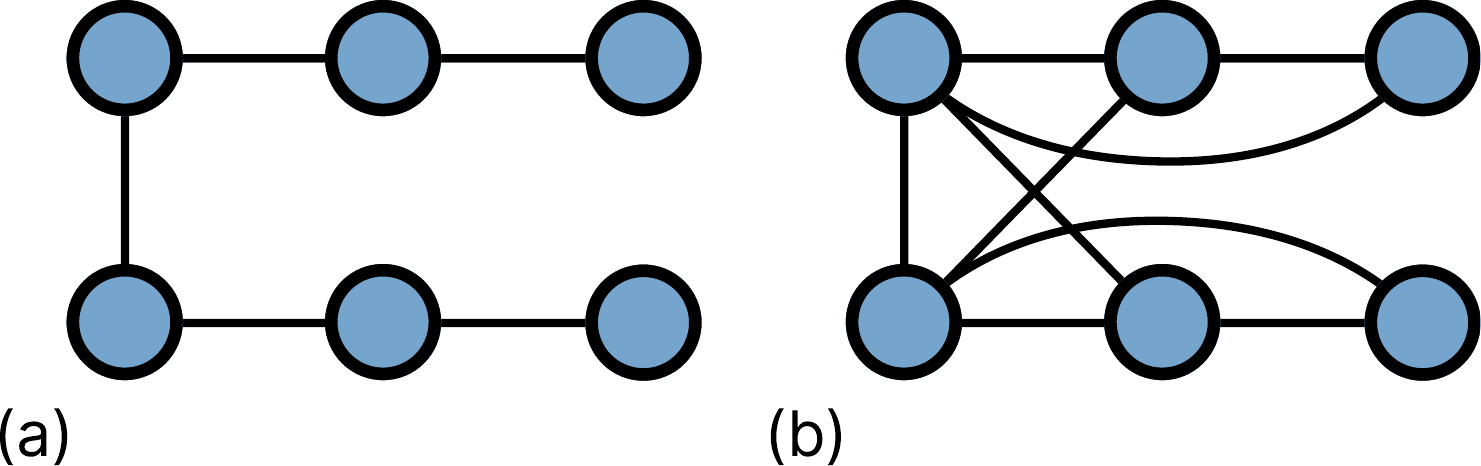}
    \caption{(a) $(1,0)$-redundant graph and (b) its 1-2-hop graph.}
    \label{fig:example_of_transformation}
\end{figure}

\begin{definition}[$(r,r')$-redundant]
\label{def:redundant}
Let $\Gcal(t) = (\Vcal, \Ecal(t))$ be an undirected graph at time $t$, and let 
$\Gcal_r^{(2)}(t) = (\Vcal, \Ecal_r^{(2)}(t))$ be its $r$-2-hop graph. We say that $\Gcal(t)$ is \textbf{$(r,r')$-redundant} with $r>r'\geq 0$ at time $t$ if:
\begin{enumerate}
    \item $\Gcal_r^{(2)}(t)$ is connected, and
    \item for all $(i,j) \notin \Ecal_r^{(2)}(t)$, $|\Bcal_i(t)\cap \Ncal_j(t)| \le r'$.
\end{enumerate}
\end{definition}

\Cref{fig:example_of_transformation} visualizes an $(r,r')$-redundant graph and its $r$-2-hop graph. A graph $\Gcal(t)$ is $(r,r')$-redundant if two things hold. First, its $r$-2-hop graph is connected. Second, for all agent pairs not connected in the $r$-2-hop graph, they share at most $r'$ neighbors (including themselves) in the graph $\Gcal(t)$. That is, for any $i,j\in \Vcal$, they share either at least $r$ or at most $r'$ neighbors (including themselves). 

\begin{remark}
The $(r,r')$-redundancy condition quantifies the topological condition required for agents to have symmetric validations of correct messages amidst adversarial communication attacks. The threshold $r$ ensures that for any two agents in the $r$-2-hop graph, there is sufficient multi-path redundancy (at least $r$ common neighbors) to verify the relayed messages. 

Conversely, when two agents are not connected in the $r$-2-hop graph, the number of neighbors they share is bounded by $r'$. As a result, this prevents Byzantine edge attacks from manipulating relayed messages to create asymmetric validation (e.g., agent $i$ having sufficient redundancy in messages from agent $j$ while the reverse does not hold). This bound therefore is intended to ensure the symmetry of the underlying communication, ensuring convergence to the optimal value functions as described in~\eqref{eq:optimal_value}-\eqref{eq:individual_optimal_value}. ~\hfill $\bullet$
\end{remark}

\begin{lemma}
Let $\Gcal(t)=(\Vcal, \Ecal(t))$ be $(r,r')$-redundant at time $t$. Then, its $r$-2-hop graph $\Gcal^{(2)}_{r}(t)=(\Vcal, \Ecal^{(2)}_r(t))$ is connected and undirected at time $t$.
\label{lem:undirected}
\end{lemma}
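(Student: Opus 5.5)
Connectivity is immediate from condition~1 of Definition~\ref{def:redundant}, so the substance of Lemma~\ref{lem:undirected} is the undirectedness of $\Gcal^{(2)}_r(t)$. The edge rule $|\Bcal_i(t)\cap \Ncal_j(t)|\geq r$ is not symmetric in $i$ and $j$ as written. I will therefore show that for any $i\neq j$,
\[
|\Bcal_i(t)\cap \Ncal_j(t)| = |\Bcal_j(t)\cap \Ncal_i(t)|,
\]
which gives $(i,j)\in\Ecal^{(2)}_r(t)$ if and only if $(j,i)\in\Ecal^{(2)}_r(t)$.

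To prove the identity, I decompose $\Bcal_i(t)\cap\Ncal_j(t) = (\Ncal_i(t)\cap\Ncal_j(t)) \cup (\{i\}\cap \Ncal_j(t))$. The two pieces are disjoint, because $i\notin\Ncal_i(t)$ for a simple graph with no self-loops. Hence
\[
|\Bcal_i(t)\cap\Ncal_j(t)| = |\Ncal_i(t)\cap\Ncal_j(t)| + \mathbbm{1}[(i,j)\in\Ecal(t)].
\]
The first term is symmetric in $i$ and $j$. The indicator is symmetric because $\Gcal(t)$ is undirected: $j\in\Ncal_i(t)$ if and only if $i\in\Ncal_j(t)$. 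The same decomposition applied with $i$ and $j$ swapped gives the identical expression, which proves the identity.

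The lemma says nothing about the case $i=j$. In that case $|\Bcal_i\cap\Ncal_i|=|\Ncal_i|$, which could formally produce a self-loop, but self-loops are irrelevant to both connectivity and symmetry; I will either exclude $i=j$ by convention or note that it does not matter. The only other point to check is that condition~2 of Definition~\ref{def:redundant} is not needed for this lemma; it will matter later in the paper for the algorithm's validation symmetry. The argument is elementary, and the main thing to get right is the disjointness and undirectedness bookkeeping above.
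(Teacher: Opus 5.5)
Your proposal is correct and follows the paper's proof: connectivity comes from condition~1 of Definition~\ref{def:redundant}, and undirectedness comes from the symmetry $|\Bcal_i(t)\cap\Ncal_j(t)|=|\Bcal_j(t)\cap\Ncal_i(t)|$ for $i\neq j$. The paper simply asserts this symmetry from the undirectedness of $\Gcal(t)$, whereas you justify it explicitly by writing the count as the number of common neighbors plus an adjacency indicator.
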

\begin{proof}
By~\Cref{def:redundant}, $\Gcal^{(2)}_{r}(t)$ is connected. Furthermore, because $\Gcal(t)$ is undirected, $|\Bcal_i(t)\cap \Ncal_j(t)|=|\Bcal_j(t)\cap \Ncal_i(t)|$ for any $i,j\in \Vcal$, $i\neq j$. Hence, $\Gcal^{(2)}_{r}$ is also undirected.
\end{proof}

\subsection{Sufficient Conditions for the FRQD-Learning}

Before stating our main result (\Cref{thm:main}), we first look at the supporting lemmas.

\begin{lemma}
\label{lem:corrupted_values}
Let Assumption~\ref{assum:cooperative}  hold.  
Let $\Gcal(t)=(\Vcal, \Ecal(t))$ be connected, and suppose that each agent $i\in \Vcal$ runs the FRQD-learning algorithm under an $F$-total Byzantine edge attack. Then, for any time $t\in \Zbb_{\geq 0}$, agent $i \in \Vcal$, and $k \in \Vcal \setminus \{i\}$, the multiset $\Lcal^i_k(t)$ contains at most $3F$ values that do not equal to $Q^k_{\xbf_t, \ubf_t}(t)$, i.e.,
\eqn{\label{eq:bound}
\vert\{\tilde Q \in \Lcal^i_k(t) \mid \tilde Q \neq Q^k_{\xbf_t, \ubf_t}(t)\}\vert \le 3F.
}
\end{lemma}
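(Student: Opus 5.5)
The plan is a counting argument over the two communication rounds. I will charge every entry of $\Lcal^i_k(t)$ that differs from $Q^k_{\xbf_t,\ubf_t}(t)$ to a distinct pair (attacked edge, direction), and then use the per-round budget $|\Ecal^z_{\Bcal}(t)|\le F$ for $z=1,2$.

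\textbf{Step 1: provenance of entries.} First I fix where each entry of $\Lcal^i_k(t)$ comes from. Under Assumption~\ref{assum:cooperative}, every message $i$ receives in the second round arrives over some edge $(j,i)\in\Ecal(t)$ and is a purported copy of $\Kcal^j(t)$. Agent $i$ keeps such a set in $\Ccal^i(t)$ only if its indices are unique. Hence each received set contributes at most one entry with index $k$, so there is an injection from entries of $\Lcal^i_k(t)$ to neighbors $j\in\Ncal_i(t)$.

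\textbf{Step 2: the two failure modes.} Take an entry, relayed by $j$, that is not equal to $Q^k_{\xbf_t,\ubf_t}(t)$. Since $j$ is cooperative, this can happen in only two ways.

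(a) The round-2 transmission on edge $(j,i)$ was attacked. At most $F$ edges are attacked in round 2, and each attacked edge carries at most one set into $i$. So this mode accounts for at most $F$ wrong entries.

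(b) The round-2 transmission was intact, but the genuine $\Kcal^j(t)$ already contains a wrong tuple $(\tilde Q,k)$. Because $j$ keeps only indices that appear exactly once, this tuple must be the unique round-1 message received by $j$ with index $k$. Uncorrupted messages carry their sender's true index and value, so this message must have arrived over some attacked round-1 edge $(m,j)$. That edge either altered the value sent by $k$ (when $m=k$) or forged the index $k$.

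\textbf{Step 3: counting round-1 damage.} A single attacked undirected edge $(m,j)$ in round 1 can corrupt the messages received by both endpoints. Its effect is therefore at most one bad tuple in $\Kcal^j(t)$ and at most one bad tuple in $\Kcal^m(t)$. Agent $i$ receives each of $\Kcal^j(t)$ and $\Kcal^m(t)$ at most once, and each contributes at most one index-$k$ entry. So every attacked round-1 edge yields at most $2$ wrong entries of $\Lcal^i_k(t)$, and mode (b) gives at most $2F$ in total. Adding the two modes gives at most $F+2F=3F$, which is \eqref{eq:bound}.

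\textbf{Main obstacle.} The delicate step is (b). I must check that the injective charging really holds: one attacked round-1 edge cannot create several bad index-$k$ entries in the same $\Kcal^j(t)$, and that is exactly what the ``appears exactly once'' rule prevents. I also need to justify that an adversary cannot fabricate extra round-2 sets beyond one per edge; I will use the edge-attack model for this, since messages are only altered or dropped on existing edges. Finally, a wrong entry could in principle involve both a round-1 and a round-2 attack. Double-charging such an entry only loosens the bound, so it is harmless.
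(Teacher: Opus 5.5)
Your proposal is correct and uses the same counting as the paper: at most $2F$ wrong index-$k$ tuples come from the bidirectional round-1 attacks and are relayed to agent $i$, and at most $F$ more come from round-2 attacks on edges into $i$. Your version is more careful, because it makes explicit that the ``exactly once'' filter and the unique-index check on relayed sets give at most one wrong index-$k$ entry per affected agent and per received set, which the paper's ``worst case'' wording leaves implicit. One small correction: the proof of \Cref{lem:laplacian_connection} counts $j$'s own set in $\Ccal^j(t)$, so your injection should land in $\Bcal_i(t)$ rather than $\Ncal_i(t)$, and this does not change the $3F$ count.
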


\begin{proof}
By~\Cref{assum:cooperative}, we know all agents $i\in \Vcal$ share the true messages $(Q^i_{\xbf_t, \ubf_t}(t), i)$ and $\Kcal^i(t)$ at lines 3 and 7 of~\Cref{alg:fully_resilient}. Therefore, any messages with values $\tilde Q \neq Q^k_{\xbf_t, \ubf_t}(t)$ must have been corrupted by Byzantine attack. 

    At line 3, every agent receives Q-values directly from its neighbors. Since the $F$-total Byzantine edge attack is bidirectional, at most $2F$ agents receive a corrupted value $(\tilde Q, k)$ where $\tilde Q \neq Q^k_{\xbf_t, \ubf_t}(t)$. At line 7, each agent relays its neighbors' Q-values it has received to other neighbors. In the worst case, (i) all $2F$ agents who received $(\tilde Q, k)$ from the previous communication round have agent $i$ as their neighbors, and (ii) $F$ additional corrupted  $(\tilde Q, k)$ get relayed to agent $i$. Therefore, for any $k\neq i$, the set $\Lcal^i_k(t)$ contains at most $3F$ values that do not equal to $Q^k_{\xbf_t, \ubf_t}(t)$.
\end{proof}

The lemma establishes that an agent receives at most $3F$ corrupted values for every other agent as a result of the two rounds of communication under $F$-total Byzantine edge attacks (lines 3-7). The bound~\eqref{eq:bound} then guarantees that any value from a specific neighbor appearing at least $3F+1$ times must be correct (and thus line 11 in~\Cref{alg:fully_resilient}). 

Now, we present another lemma. Let $\Scal_{\xbf_t,\ubf_t}(t)=\{Q^1_{\xbf_t, \ubf_t}(t),\dots, Q^n_{\xbf_t, \ubf_t}(t)\}$ be the set of Q-values for all agents at state $\xbf_t$ and $\ubf_t$ at time $t$. 

\begin{lemma}
\label{lem:laplacian_connection}
Let~\Cref{assum:cooperative} hold.
    Let $\Gcal(t)=(\Vcal, \Ecal(t))$ be $(6F+1, 0)$-redundant, and let $\Gcal_{6F+1}^{(2)}(t)$ be $(6F+1)$-2-hop graph of $\Gcal(t)$ at time $t$. Suppose that each agent $i\in \Vcal$ runs the FRQD-learning algorithm under an $F$-total Byzantine edge attack. Then, the following holds:
\eqn{\label{eq:Q_equivalent_update}
    \Qbf_{x,u}(t+1) & = \begin{bmatrix}
        Q_{x,u}^1(t+1) & \cdots &     Q_{x,u}^n(t+1)
    \end{bmatrix}^\top \\ & = \left((1-\alpha_{x,u}(t))I_n - \beta_{x,u}(t) L_t\right)\Qbf_{x,u}(t)  \nonumber\\ & \ \ 
    +\alpha_{x,u}(t)\nu_{\xbf_t,\ubf_t}(t)\nonumber,}
where $L_t := L(\Gcal_{6F+1}^{(2)}(t))$ is the Laplacian matrix of $\Gcal_{6F+1}^{(2)}(t)$ and $\nu_{\xbf_t,\ubf_t}(t)=\begin{bmatrix}
    \nu_{\xbf_t,\ubf_t}^1(t) & \cdots &   \nu_{\xbf_t,\ubf_t}^n(t) 
\end{bmatrix}^\top$ with $\nu^i_{\xbf_t,\ubf_t}(t) = c^i(\xbf_t,\ubf_t) + \gamma \min_{v\in \Ucal} Q^i_{\xbf_{t+1},v}(t)$.
    
\end{lemma}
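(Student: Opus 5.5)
The plan is to show that, for every agent $i$ and every time $t$, the multiset $\Pcal^i(t)$ built in lines 8--12 of \Cref{alg:fully_resilient} equals exactly $\{Q^k_{\xbf_t,\ubf_t}(t) \mid (i,k)\in \Ecal^{(2)}_{6F+1}(t)\}$, each true value appearing once. Once this is established, the update \eqref{eq:q_update} with $\Pcal^i_N(t)=\Pcal^i(t)$ becomes
\[
Q^i_{x,u}(t+1)=(1-\alpha_{x,u}(t))Q^i_{x,u}(t)-\beta_{x,u}(t)\sum_{k:(i,k)\in\Ecal^{(2)}_{6F+1}(t)}\bigl(Q^i_{x,u}(t)-Q^k_{x,u}(t)\bigr)+\alpha_{x,u}(t)\nu^i_{\xbf_t,\ubf_t}(t).
\]
For $(x,u)\neq(\xbf_t,\ubf_t)$ both weights vanish and nothing changes. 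Stacking over $i$ and using that $\Gcal^{(2)}_{6F+1}(t)$ is undirected (\Cref{lem:undirected}), the sum is the $i$-th row of $L_t\Qbf_{x,u}(t)$, which gives \eqref{eq:Q_equivalent_update}.

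To identify $\Pcal^i(t)$, fix $i$ and $k\neq i$ and count the correct copies of $Q^k_{\xbf_t,\ubf_t}(t)$ that reach $i$ in $\Lcal^i_k(t)$. These copies arrive along paths of length at most two from $k$ to $i$, namely the direct edge (relayed via $\Kcal^k$, or via $i$'s own receipt, as in the algorithm's convention for $\Bcal_i$) and paths through a common neighbor $j\in\Ncal_i(t)\cap\Ncal_k(t)$. Their number is governed by $|\Bcal_i(t)\cap\Ncal_k(t)|$.

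\emph{Case 1: $(i,k)\in\Ecal^{(2)}_{6F+1}(t)$.} There are at least $6F+1$ vertex-disjoint such paths. Each path uses two edge-transmissions (one in round 1 and one in round 2), and an attack on a transmission destroys or alters at most one path in that round. Since at most $F$ edges are attacked in each of the two rounds, at most $2F$ of these paths are corrupted. The first-filter discards, where an attacker injects a duplicate index and thereby removes a correct entry from some $\Kcal^j$, are also charged to attacked edges into $j$. I would therefore account the total loss as at most $3F$ (mirroring the counting in \Cref{lem:corrupted_values}). This leaves at least $3F+1$ correct copies, so $Q^k_{\xbf_t,\ubf_t}(t)$ passes the threshold.

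\emph{Case 2: $(i,k)\notin\Ecal^{(2)}_{6F+1}(t)$.} Redundancy with $r'=0$ gives $|\Bcal_i(t)\cap\Ncal_k(t)|=0$, so no honest path delivers $k$'s value. Every element of $\Lcal^i_k(t)$ is then fabricated, and there are at most $3F$ of them by the argument of \Cref{lem:corrupted_values} (attacks can insert entries labeled $k$). No value reaches count $3F+1$, so nothing labeled $k$ enters $\Pcal^i(t)$.

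Finally, in Case 1, \Cref{lem:corrupted_values} shows that at most $3F$ values in $\Lcal^i_k(t)$ differ from the truth, so no false value passes and exactly one distinct value, the correct one, is added. Indices $k=i$ must also be excluded; this follows from the first filtering and from the fact that $i\notin\Ncal_i$.

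The main obstacle is the Case 1 count: bounding the total loss of correct copies by $3F$ rather than something larger. Dropping or duplicating an index in round 1 can wipe out an entire relayed set $\Kcal^j$, through the ``unique indices'' check in line 7, or remove entries via the duplicate rule. I would handle this by charging each lost correct copy to a distinct attacked transmission lying on that vertex-disjoint path. The $6F+1$ margin is exactly what leaves $3F+1$ surviving copies, with a $3F$ allowance for losses, so the charging must be tight.
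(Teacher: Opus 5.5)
Your decomposition matches the paper's proof. Case 1 is its Part 1a, Case 2 (using $r'=0$ to get $\Bcal_i(t)\cap\Ncal_k(t)=\emptyset$) is Part 1b, your last paragraph is Part 2, and the stacking into $L_t$ is Part 3.

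\textbf{The gap is in the Case 1 count, the step you flag yourself: your accounting does not actually give $3F$.} You bound round-1 losses per path (at most $F$ attacked edges of the form $(k,j')$) and then add first-filter discards separately. But those discards can reach $2F$ on their own. A single attacked edge $(a,b)$ joining two relays $a,b\in\Ncal_i(t)\cap\Ncal_k(t)$ can forge a tuple labeled $k$ in \emph{both} directions. Then both $a$ and $b$ see index $k$ twice and drop it. Your separate bounds therefore sum to $4F$, which leaves only $2F+1$ correct copies, below the threshold. For the same reason, charging each loss to an attacked transmission ``lying on that path'' fails, since $(a,b)$ lies on neither path.

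The missing observation, which is how the paper argues, is to charge by \emph{receiving endpoint}:
\begin{itemize}
\item Every round-1 effect (altered or dropped tuple, injected duplicate) is felt only at an endpoint of an attacked edge. So at most $2F$ of the at least $6F+1$ relays in $\Bcal_i(t)\cap\Ncal_k(t)$ lose $(Q^k_{\xbf_t,\ubf_t}(t),k)$, and at least $4F+1$ keep it.
\item In round 2, only the at most $F$ attacked edges incident to $i$ touch $\Ccal^i(t)$, and each spoils one relayed set. So at least $3F+1$ correct copies survive.
\end{itemize}

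\textbf{A smaller error: index $k=i$ is not excluded by the first filtering.} The rule $j\neq i$ in line 5 only keeps an agent's own tuple out of its own $\Kcal$-set. (Incidentally, this means $k$'s value never travels ``via $\Kcal^k$''.) Every neighbor $j'$ of $i$ keeps $(Q^i_{\xbf_t,\ubf_t}(t),i)$ in $\Kcal^{j'}(t)$ and relays it back, and the loop in lines 8--12 runs over all $k$. So $Q^i_{\xbf_t,\ubf_t}(t)$ can enter $\Pcal^i(t)$.

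This is harmless for the lemma:
\begin{itemize}
\item the self-term contributes $Q^i_{x,u}(t)-Q^i_{x,u}(t)=0$ to the sum in \eqref{eq:q_update};
\item a forged value labeled $i$ still has at most $3F$ copies by the same endpoint count.
\end{itemize}
Still, your exact characterization of $\Pcal^i(t)$ must allow for this self-term. The paper's proof also passes over it silently.
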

\begin{proof}

Our proof is in three parts: (i) we show that $(i,j)\in \Ecal^{(2)}_{6F+1}(t)$ if and only if $Q^i_{\xbf_t,\ubf_t}(t)\in \Pcal^{j}(t)$ and $Q^{j}_{\xbf_t,\ubf_t}(t)\in \Pcal^{i}(t)$. (ii) We prove that $\Pcal^i(t)\subseteq \Scal_{\xbf_t,\ubf_t}(t)$ for all $i\in \Vcal$ and $t\in \Zbb_{\geq 0}$. (iii) We combine (i) and (ii) to complete the proof.

\textbf{Part 1a $(\implies)$:} By~\Cref{lem:undirected}, the $(6F+1)$-2-hop graph $\Gcal^{(2)}_{6F+1}$ is connected. Then, for any $i\in \Vcal$, there must exist an agent $j\in \Vcal \setminus \{i\}$ such that $|\Ncal_i(t)\cap \Bcal_{j}(t)|\geq 6F+1$.

By~\Cref{assum:cooperative}, each agent $i\in \Vcal$ correctly shares $(Q^i_{\xbf_t, \ubf_t}(t), i)$ and $\Kcal^i(t)$ at lines 3 and 7 of~\Cref{alg:fully_resilient}, respectively. By definition of $F$-total Byzantine edge attack (where each edge attack affects at most two agents), at least $4F+1$ nodes $j'\in \Ncal_i(t)\cap \Bcal_{j}(t)$ will correctly have $(Q^i_{\xbf_t, \ubf_t}(t), i)$ in $\Kcal^{j'}(t)$ by line 6. Now, (i) because each agent $j'$ correctly shares $\Kcal^{j'}(t)$ at line 7 by~\Cref{assum:cooperative} and (ii) by definition of $F$-total Byzantine edge attack, agent $j$ will have at least $3F+1$ repetitions of $(Q^i_{\xbf_t, \ubf_t}(t), i)$ in $\Ccal^{j}(t)$. Hence, we have ${\rm count}(Q^i_{\xbf_t, \ubf_t}(t), \Lcal^{j}_i(t))\geq 3F+1$, which means $Q^i_{\xbf_t, \ubf_t}(t)\in \Pcal^{j}(t)$. 

Since $\Gcal^{(2)}_{6F+1}$ is undirected by~\Cref{lem:undirected}, we conclude that $Q^j_{\xbf_t, \ubf_t}(t)$ will be in $\Pcal^{i}(t)$ also. Finally, since an edge $(i',j')$ exists in $\Ecal^{(2)}_{6F+1}(t)$ implies $|\Ncal_{i'}(t)\cap \Bcal_{j'}(t)|\geq 6F+1$ by definition, we conclude for any $(i,j)\in \Ecal^{(2)}_{6F+1}(t)$, $Q^i_{\xbf_t,\ubf_t}(t)\in \Pcal^{j}(t)$ and $Q^{j}_{\xbf_t,\ubf_t}(t)\in \Pcal^{i}(t)$.

\textbf{Part 1b $(\Longleftarrow)$:} Assume to the contrary that $Q^i_{\xbf_t,\ubf_t}(t)\in \Pcal^{j}(t)$ and $Q^{j}_{\xbf_t,\ubf_t}(t)\in \Pcal^{i}(t)$ but also $(i,j)\notin \Ecal^{(2)}_{6F+1}(t)$. For $Q^i_{\xbf_t,\ubf_t}(t)$ and $Q^{j}_{\xbf_t,\ubf_t}(t)$ to be in $\Pcal^{j}(t)$ and $\Pcal^{i}(t)$, respectively, ${\rm count}(Q^i_{\xbf_t,\ubf_t}(t),\Lcal^j_i(t))\geq 3F+1$ and ${\rm count}(Q^j_{\xbf_t,\ubf_t}(t),\Lcal^i_j(t))\geq 3F+1$. Nevertheless, $(i,j)\notin \Ecal^{(2)}_{6F+1}(t)$ implies that $|\Ncal_i(t)\cap \Bcal_{j}(t)|<6F+1$ and thus $|\Ncal_i(t)\cap \Bcal_{j}(t)|=0$ by definition. Since Byzantine edge attacks can manipulate at most $3F$ values in $\Lcal^j_i(t)$ and $\Lcal^i_j(t)$ by~\Cref{lem:corrupted_values}, we reach a contradiction.

\textbf{Part 2:} Assume to the contrary that there exist $i \in \Vcal$ and $t\in \Zbb_{\geq 0}$ such that $\Pcal^i(t) \not\subseteq \Scal_{\xbf_t, \ubf_t}(t)$. This implies there exists $Q'\in \Pcal^i(t)$ such that $Q' \notin \Scal_{\xbf_t, \ubf_t}(t)$. For this to happen, ${\rm count}(Q',\Lcal^i_k(t))\geq 3F+1$ has to hold for some $k\in \Vcal$. This is a contradiction, as Byzantine edge attacks can manipulate at most $3F$ values $Q' \notin \Scal_{\xbf_t, \ubf_t}(t)$ by~\Cref{lem:corrupted_values}.

\textbf{Part 3:} From Part 1, the existence of an edge $(i,j)$ in $\Ecal^{(2)}_{6F+1}(t)$ is necessary and sufficient for $Q^i_{\xbf_t,\ubf_t}(t) \in \Pcal^j(t)$ and $Q^j_{\xbf_t,\ubf_t}(t)\in \Pcal^i(t)$. In addition, from Part 2, we know that each agent $i\in \Vcal$ uses values only in $\Scal_{\xbf_t, \ubf_t}(t)$ to update its $Q^i_{\xbf_t, \ubf_t}(t+1)$ for all time $t$. Therefore, this is equivalent to agents $i\in \Vcal$ sharing $Q^i_{\xbf_t, \ubf_t}(t)$ to agents in $\Ncal^{(2)}_{i,6F+1}(t)=\{j\in \Vcal \mid (i,j)\in \Ecal^{(2)}_{6F+1}(t)\}$ before update. Hence, the update step from~\eqref{eq:q_update} with $\Pcal^i_N(t)=\Pcal^i(t)$ can be rewritten as

\eqnN{
Q^i_{x,u}(t+1)= &(1-\alpha_{x,u}(t))Q^i_{x,u}(t) + \alpha_{x,u}(t) \nu^i_{\xbf_t,\ubf_t}(t) \\ &-\beta_{x,u}(t)\sum_{j\in \Ncal^{(2)}_{i,6F+1}(t)} \left( Q^i_{x,u}(t)-Q^j_{x,u}(t)\right),
}
for each $x\in \Xcal$ and $u\in \Ucal$, which completes the proof.
\end{proof}

\Cref{lem:laplacian_connection} establishes that running the FRQD-learning algorithm on $\Gcal(t)$ is mathematically equivalent to executing the standard QD-learning algorithm~\cite{kar2013QD} on the $(6F+1)$-2-hop graph $\Gcal_{6F+1}^{(2)}(t)$ of $\Gcal(t)$ at time step $t$. Now, we present the main result of our paper:
\begin{theorem}
\label{thm:main}
    Let~\Cref{assum:cooperative}-\ref{assum:infinite_often} hold. If (i) $\Gcal(t)=(\Vcal, \Ecal(t))$ is $(6F+1, 0)$-redundant under an $F$-total Byzantine edge attack, and (ii) each agent $i\in \Vcal$ runs FRQD-learning algorithm for all $t \in\Zbb_{\geq 0}$, 
    for each agent $i\in \Vcal$, 
    \eqn{\Pbb\left(\lim_{t\to \infty} \Qbf^i(t) = \Qbf^*\right)=1, \\
    \Pbb\left(\lim_{t\to \infty} \Vbf^{i}(t) =  \Vbf^*\right)=1.
    }
\end{theorem}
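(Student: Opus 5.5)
The plan is to reduce the theorem to the convergence result for standard QD-learning, \cite[Thm. 1]{kar2013QD}, by way of \Cref{lem:laplacian_connection}.

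\textbf{Step 1 (reduction).} By \Cref{lem:laplacian_connection}, for every $t$ and every $(x,u)$, the vector iterate $\Qbf_{x,u}(t)$ produced by FRQD-learning under the $F$-total Byzantine edge attack obeys \eqref{eq:Q_equivalent_update}. The Laplacian there is $L_t=L(\Gcal^{(2)}_{6F+1}(t))$. For pairs $(x,u)\neq(\xbf_t,\ubf_t)$ we have $\alpha_{x,u}(t)=\beta_{x,u}(t)=0$, so those components stay unchanged, which is consistent with \eqref{eq:q_update}. This recursion is exactly the QD-learning recursion of \cite{kar2013QD}, run on the time-varying graph sequence $\{\Gcal^{(2)}_{6F+1}(t)\}$ and driven by attack-free data. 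The Byzantine edges only determine which messages are filtered. Since \Cref{lem:laplacian_connection} gives the same $L_t$ under every admissible attack pattern, the attack cannot enter the dynamics at all.

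\textbf{Step 2 (verify the hypotheses of \cite[Thm. 1]{kar2013QD}).}

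\emph{Stochastic and MDP hypotheses.} The MDP, filtration, moment and exploration hypotheses (M.1 and M.4) are \Cref{assum:MDP,assum:infinite_often}. The step-size hypotheses follow from \eqref{eq:alphas}--\eqref{eq:betas} with the stated ranges of $\tau_1,\tau_2$.

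\emph{Graph hypothesis.} By \Cref{lem:undirected}, each $\Gcal^{(2)}_{6F+1}(t)$ is undirected and connected. Hence $L_t$ is symmetric positive semidefinite with $\lambda_2(L_t)>0$ and $L_t\mathbf 1=0$. Because $\Gcal^{(2)}_{6F+1}(t)$ ranges over finitely many connected graphs on $n$ vertices, $\lambda_2(L_t)$ is bounded below uniformly in $t$ by a positive constant. For the same reason, $\|L_t\|$ is bounded above uniformly. This uniform connectivity is what the consensus part of \cite{kar2013QD} needs. If their theorem is stated for a random i.i.d. Laplacian sequence, I would note that a deterministic or adapted sequence with $\lambda_2(\Ebb[L_t\mid\Fcal_t])\ge\lambda_{\min}>0$ is covered, since their proof only uses this conditional mean-connectivity bound. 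I would also check that $L_t$ is $\Fcal_t$-measurable, or independent of the sampling noise, so that the martingale-difference structure of $\nu$ is preserved.

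\textbf{Step 3 (conclude).} With these hypotheses in place, \cite[Thm. 1]{kar2013QD} gives $\Qbf^i(t)\to\Qbf^*$ almost surely for every $i$. The limit $\Qbf^*$ solves the Bellman equation with the averaged cost $c=(1/n)\sum_i c^i$, because $\mathbf 1^\top L_t=0$ preserves the network average. Finally, $\Vbf^i(t)\to\Vbf^*$ almost surely follows from continuity of $\min_{u}$ over the finite set $\Ucal$.

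\textbf{Main obstacle.} The delicate point is Step 2: whether the time-varying, possibly adversarially chosen (but always connected, undirected) topology fits the assumption (M.3/M.5) of \cite{kar2013QD}. I would handle it by proving the uniform spectral-gap bound $\inf_t\lambda_2(L_t)\ge\min\{\lambda_2(L(\Gcal)):\Gcal \text{ connected on } \Vcal\}>0$. I would then argue that the consensus and mixed-timescale arguments of \cite{kar2013QD} go through verbatim with this deterministic bound in place of the expectation bound.
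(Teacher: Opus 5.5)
Your proposal is correct and follows the paper's proof. Like the paper, you reduce FRQD-learning via \Cref{lem:laplacian_connection} to QD-learning on the $(6F+1)$-2-hop graph, use \Cref{lem:undirected} to get $\lambda_2(L_t)>0$, and invoke \cite[Thm. 1]{kar2013QD} with Assumptions~\ref{assum:MDP}--\ref{assum:infinite_often} and the step sizes \eqref{eq:alphas}--\eqref{eq:betas}; your extra checks go beyond the paper, which only notes $\lambda_2(L_t)>0$ for each $t$ before citing that theorem: a uniform lower bound on $\lambda_2(L_t)$ and an upper bound on $\|L_t\|$ from the finiteness of the graph family, and measurability of $L_t$ to fit the communication hypothesis of \cite{kar2013QD}.
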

\begin{proof}
By~\Cref{lem:laplacian_connection}, Q-value updates through the FRQD-learning algorithm at each time step $t$ is equivalent to~\eqref{eq:Q_equivalent_update}, where $L_t$ is the Laplacian matrix of the $(6F+1)$-2-hop graph of $\Gcal(t)$. Since $\Gcal(t)$ is $(6F+1,0)$-redundant, its $(6F+1)$-2-hop graph is connected and undirected (by~\Cref{lem:undirected}), i.e., $\lambda_2(L_t)>0$, for all $t$. Furthermore, by~\Cref{assum:MDP}-\ref{assum:infinite_often} and with~\eqref{eq:alphas}-\eqref{eq:betas},~\cite[Thm. 1]{kar2013QD} holds, completing our proof.
\end{proof}

\Cref{thm:main} guarantees that, despite Byzantine edge attacks, all agents converge almost surely to the optimal value functions. This represents a notable improvement over prior works, which in general only guarantee convergence to near-optimal solutions. We empirically support this in~\Cref{sec:sim}.

\section{Construction of Redundant Network Graph}

In the previous section, we have identified $(6F+1,0)$-redundancy as the sufficient condition for all agents to learn the optimal value functions $\Vbf^*$~\eqref{eq:optimal_value}. In this section, we present (i) a systematic construction of an $(r,r')$-redundant graph (\Cref{prop:construction}) and (ii) its computation time (\Cref{prop:computation}). As the underlying structural requirements are independent of the learning dynamics, we focus on time-invariant graphs and drop the argument $t$ on a graph throughout this discussion.

We first present a systematic method to construct $(r,r')$-redundant graphs for any $r$ and $r'$:

\begin{prop}
\label{prop:construction}
    Let $\Vcal=\{1,\dots, n\}$ where $n>r$, and $\Vcal_c=\{1,\dots, r\}\subset \Vcal$. Then, a graph $\Gcal=(\Vcal, \Ecal)$ is $(r,r')$-redundant for $r>r'\geq 0$ if (i) every node in $\Vcal_c$ is connected to every other node in $\Vcal_c$ and (ii) a node $i\in\Vcal\setminus \Vcal_c$ is connected to all $r$ nodes in $\Vcal_c$.
\end{prop}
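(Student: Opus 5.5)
The approach is to compute $|\Bcal_i\cap \Ncal_j|$ directly for every ordered pair $i\neq j$ and show that it is always at least $r$. Then the $r$-2-hop graph $\Gcal_r^{(2)}$ is the complete graph on $\Vcal$. Condition 1 of \Cref{def:redundant} then holds trivially, and condition 2 holds vacuously since there is no pair $(i,j)\notin\Ecal_r^{(2)}$. This works for every $r'$ with $0\le r'<r$, which explains why $r'$ plays no role in the construction.

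I will first treat the minimal graph, whose edges are exactly those required by (i) and (ii). There, $\Ncal_i=\Vcal\setminus\{i\}$ for $i\in\Vcal_c$ and $\Ncal_i=\Vcal_c$ for $i\notin\Vcal_c$. The verification splits into four cases according to membership of $i$ and $j$ in $\Vcal_c$.
\begin{enumerate}
\item If $i,j\in\Vcal_c$, then $\Bcal_i=\Vcal$ and $\Ncal_j=\Vcal\setminus\{j\}$, so the intersection has size $n-1\ge r$ because $n>r$.
\item If $i\in\Vcal_c$ and $j\notin\Vcal_c$, then $\Bcal_i\cap\Ncal_j=\Vcal\cap\Vcal_c=\Vcal_c$, of size $r$.
\item If $i\notin\Vcal_c$ and $j\in\Vcal_c$, then $\Bcal_i=\Vcal_c\cup\{i\}$ and $\Ncal_j=\Vcal\setminus\{j\}$. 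The intersection is $(\Vcal_c\setminus\{j\})\cup\{i\}$, of size $r$.
\item If $i,j\notin\Vcal_c$, then $\Bcal_i\cap\Ncal_j=(\Vcal_c\cup\{i\})\cap\Vcal_c=\Vcal_c$, of size $r$.
\end{enumerate}
Hence every pair is adjacent in $\Gcal_r^{(2)}$. The graph is undirected by \Cref{lem:undirected}'s symmetry argument, and it is complete, hence connected.

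If the statement is read as allowing additional edges among nodes of $\Vcal\setminus\Vcal_c$, I will add a monotonicity remark. Adding edges can only enlarge each $\Bcal_i$ and $\Ncal_j$, so $|\Bcal_i\cap\Ncal_j|$ is nondecreasing in the edge set. The bound $\ge r$ therefore persists, and $\Gcal_r^{(2)}$ stays complete.

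The only delicate step is case 3, where removing $j$ from $\Vcal_c$ must be exactly compensated by including $i$ itself through $\Bcal_i$. This is where the definition's use of $\Bcal_i$ rather than $\Ncal_i$ matters, and I will check it carefully. The hypothesis $n>r$ is used only in case 1, and there only to ensure $\Vcal\setminus\Vcal_c$ is nonempty, so it is not a real constraint.
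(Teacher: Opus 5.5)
Your proof is correct and is essentially the paper's argument: both show $|\Bcal_i\cap\Ncal_j|\ge r$ for every pair $i\neq j$, so $\Gcal_r^{(2)}$ is complete and condition~2 of \Cref{def:redundant} holds vacuously for every $r'<r$. The paper uses induction on $n$ (base case $K_{r+1}$, then each added node shares $\Vcal_c$ with every existing node), whereas your direct four-case count and monotonicity remark make explicit what that inductive step assumes silently, namely that pairs already present keep at least $r$ common neighbors as nodes and edges are added (one small correction: $n>r$ is genuinely needed, not cosmetic, since for $n=r\ge 2$ each pair in $K_r$ has only $r-1$ common neighbors).
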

\begin{proof}

Here we prove by induction that $\Gcal$ is $(r,r')$-redundant for any $n\geq r+1$.

When $n=k=r+1$, every node in $\Vcal$ is connected to each other. Then, for any $i,j\in \Vcal$ such that $i\neq j$, $|\Bcal_i \cap \Ncal_j|\geq r$. Hence, $r$-2-hop graph $\Gcal_r^{(2)}=(\Vcal, \Ecal_r^{(2)})$ of $\Gcal$ is fully connected. Because $|\Bcal_i \cap \Ncal_j|\geq r$ holds for any $i,j\in \Vcal$, $\Gcal$ is $(r,r')$-redundant for any $r'\geq 0$.

Now, assume that $\Gcal$ is $(r,r')$-redundant for any $r'\geq 0$ with $n=k>r+1$. By assumption, $\Gcal$ is $(r,r')$-redundant for any $r'\geq 0$ with $n=k>r+1$, which implies $|\Bcal_i \cap \Ncal_j|\geq r$ for any $i,j\in \Vcal_c\cup \{r+1,\dots, k\}$. Then, if $n=k+1$, $\Gcal$ is constructed by having agent $k+1$ connected to all $r$ nodes in $\Vcal_c$ which are also connected to agents $r+2,\dots, k$. Hence, agent $k+1$ also has $|\Bcal_{k+1}\cap \Ncal_i|\geq r$ for any $i\in \Vcal_c\cup\{r+1,\dots, k\}$, implying $\Gcal$ is $(r,r')$-redundant. 
\end{proof}

While \Cref{prop:construction} provides a method to construct $(r,r')$-redundant graphs, the following result establishes that we can verify the redundancy of an arbitrary graph efficiently.

\begin{prop}
Given an $r, r'\in \Zbb_{\geq 0}$ and a communication graph $\Gcal=(\Vcal, \Ecal)$ with $|\Vcal|=n$, one can verify whether $\Gcal$ is $(r,r')$-redundant in $O(n^{3})$
    \label{prop:computation}
\end{prop}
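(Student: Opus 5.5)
The plan is to give an explicit algorithm that checks the two conditions of \Cref{def:redundant} directly, and to bound its running time by $O(n^3)$. The only quantities that appear in the definition are the counts $m_{ij} := |\Bcal_i\cap \Ncal_j|$ for ordered pairs $i\neq j$. So the first step is to compute all of them at once.

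Let $A\in\{0,1\}^{n\times n}$ be the adjacency matrix of $\Gcal$. It can be built in $O(n^2)$ time from the edge list. Set $B = A + I_n$, which is the indicator matrix of the extended neighbor sets $\Bcal_i$. For every pair $i,j$,
\eqnN{
(BA)_{ij} = \sum_{k\in\Vcal} \mathbbm{1}[k\in \Bcal_i]\,\mathbbm{1}[k\in\Ncal_j] = |\Bcal_i\cap\Ncal_j|.
}
Hence a single product of two $n\times n$ matrices gives every $m_{ij}$. With schoolbook multiplication this costs $O(n^3)$. Equivalently, one can loop over all $O(n^2)$ pairs and intersect two indicator vectors of length $n$ in $O(n)$ time each.

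Next we build $\Gcal_r^{(2)}$ and check condition (2). Scan all pairs $i\neq j$ in $O(n^2)$ time. Put $(i,j)$ into $\Ecal_r^{(2)}$ exactly when $m_{ij}\ge r$. If instead $m_{ij}<r$, test whether $m_{ij}\le r'$; if any such pair has $r'<m_{ij}<r$, then condition (2) fails and we reject.

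The symmetry $m_{ij}=m_{ji}$ noted in the proof of \Cref{lem:undirected} guarantees that the edge set we obtain is undirected. It is still worth remarking on this, because condition (1) is connectivity of an undirected graph.

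Condition (1) is then checked by running a BFS or DFS on $\Gcal_r^{(2)}$. This graph has $n$ vertices and at most $O(n^2)$ edges, so the search takes $O(n^2)$ time. We accept if and only if the search visits all $n$ vertices and condition (2) was not violated.

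Summing the stages gives $O(n^2)+O(n^3)+O(n^2)+O(n^2)=O(n^3)$. Correctness is immediate, since each stage tests one clause of \Cref{def:redundant} verbatim.

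The only step that needs any care is computing all the intersection sizes $|\Bcal_i\cap\Ncal_j|$ within the cubic budget. The matrix-product identity above settles it, and faster matrix multiplication would even give $O(n^{\omega})$.
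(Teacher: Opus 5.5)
Your proposal is correct and matches the paper's proof: the paper computes $\bar A = A^2 + A$, which is exactly your $(A+I_n)A$, by standard matrix multiplication in $O(n^3)$ to obtain every count $|\Bcal_i\cap\Ncal_j|$, checks the entries in $O(n^2)$, and runs BFS on the $r$-2-hop graph in $O(n+m_r)=O(n^2)$ time. Your explicit restriction to pairs $i\neq j$ and your symmetry remark are small clarifications that the paper leaves implicit.
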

\begin{proof}
    Let $A$ be an adjacency matrix of $\Gcal$. Then, $\bar{A}:=A^2+A$ will contain elements $\bar{a}_{ij}$ that counts the number of shared neighbors between nodes $i$ and $j$ (including node $i$ itself) i.e., $|\Bcal_i\cap \Ncal_j|$. Computing $\bar{A}$ using standard matrix multiplication requires $O(n^3)$ operations~\cite[Sec. 4]{cormen2009intro_to_alg}, and checking all entries adds at most $O(n^2)$ operations. Next, to verify that the $r$-2-hop graph $\Gcal_r^{(2)} = (\Vcal, \Ecal_r^{(2)})$ of $\Gcal$ is connected, one can perform a Breadth-First Search (BFS), which requires $O(n + m_r)$ time, where $m_r = |\Ecal_r^{(2)}|$~\cite[Sec. 22.2]{cormen2009intro_to_alg}. Therefore, the total required computation is $O(n^{3}+m_r)$. Since $m_r\leq \binom{n}{2}=O(n^2)$, $O(n^{3}+m_r)=O(n^{3})$.
\end{proof}

\Cref{prop:computation} establishes that $(r,r')$-redundancy can be verified efficiently. Compare this with $r$-robustness~\cite{leblanc2013resilient}, whose definition is given below: 
\begin{definition}[$\mathbf r$\textbf{-robustness}~\cite{leblanc2013resilient}]
    A graph $\Gcal = (\Vcal,\Ecal)$ is $\mathbf r$\textbf{-robust} if for every pair of nonempty, disjoint subsets $\Scal_1,\Scal_2 \subset \Vcal$, at least one of the subsets contains a node with at least $r$ neighbors outside itself. That is, there exists a node $i \in \Scal_k$ such that $|\Ncal_i \setminus \Scal_k| \geq r$ for some $k \in \{1, 2\}$.
    \label{def:r_robust}
\end{definition}

While $(2F+1)$-robustness provides a sufficient condition for Byzantine-resilient QD-learning in other MARL frameworks~\cite{lin2020toward,wu2021byzantine_journal,yao2024communication_efficient}, determining whether a graph satisfies this property is co-NP-complete and thus computationally expensive~\cite{zhang2015notion, TAC2025}. This computational bottleneck renders robustness-based design impractical for large-scale and dynamic networks. In contrast, $(r,r')$-redundancy offers a tractable alternative, making our approach more suitable.
\begin{remark}
\label{remark:relationship_w_robust}
Studying the precise relationship between $(r,r')$-redundancy and $r$-robustness is an interesting direction for future work. Nevertheless, we note that the graph constructed in~\Cref{prop:construction} is at least $\lceil \frac{r+1}{2}\rceil$-robust. The key observation is that the construction contains a fully connected subgraph of size $r+1$, with $r$ of these nodes additionally connected to the remaining nodes. Since a complete graph with $n$ nodes is $\lceil \frac{n}{2}\rceil$-robust~\cite[Lem. 4]{leblanc2013resilient}, and robustness is preserved under the preferential attachment~\cite[Thm. 5]{leblanc2013resilient}, the claim follows. A formal proof is omitted due to space limitations. ~\hfill $\bullet$
\end{remark}

\section{Simulations}
\label{sec:sim}
Here, we validate our method through simulations. Consider a system of $n=10$ agents ($\Vcal=\{0, \dots, 9\}$) with heterogeneous capabilities interacting over a network. There are six different tasks that must be completed sequentially, each requiring a pair of robots. Since robots have different capabilities, their operation costs and performance (probability of completing the task) depend on both the task and assigned robot-pairs. There is no limit on the number of tasks in which each robot may participate.

To determine an optimal assignment policy that minimizes the total operation cost, we define the state space $\Xcal=\{1,\dots, 7\}$, where states $1,\dots,6$ correspond to the current task and state $7$ represents task completion. The control space is defined as $\Ucal=\{(i,j)\mid i,j\in \Vcal , i\neq j\}$, representing the deployment of robot pair. The system starts at the initial state $\xbf_0=1$ and the terminal state is $7$. The controlled transition distributions are defined as
\eqn{p_{xx'}^{(i,j)} = \begin{cases}
\label{eq:transition_dynamics}
    \frac {\abs{i-j}}{\abs{i-j}+x} & \text{ if } x' = x+1,\\
     \frac{x}{\abs{i-j}+x} & \text{ if } x' = x,\\
\end{cases}}
and $p_{77}^{(i,j)}=1$ for all $(i,j)\in \Ucal$.
The local cost function for agent $i\in \Vcal$ is defined as 
\eqn{
\label{eq:costs}
c^i(x,u) = \begin{cases}
\delta_{ij}(x)  & \text{ if } u=(i,j),\\
0 & \text{ otherwise},
\end{cases}
}
where $\delta_{ij}(x)$ is sampled uniformly from the interval $[0,50]$ at $t=0$ and remains fixed for $t>0$, with $c^i(7,u)=0$ for all $u\in \Ucal$. Notably, both the transition dynamics~\eqref{eq:transition_dynamics} and the cost structures~\eqref{eq:costs} are simultaneously state- and pair-dependent. We set the discounting factor $\gamma=0.9$. The algorithm parameters are chosen as $\epsilon_1=10^{-4}$, $a=b=1/n$, $\tau_1=1$, and $\tau_2=\tau_1-1/(2+\epsilon_1)-\epsilon_2$, where $\epsilon_2=10^{-4}$.

We compare our method against two algorithms, all evaluated on the same MDP with identical transition dynamics and costs:
\begin{itemize}
    \item \textbf{Oracle:} The vanilla QD-learning algorithm~\cite{kar2013QD} in the absence of adversarial attacks. This serves as the ground-truth optimal value functions and policies.
    \item \textbf{Baseline:} The resilient QD-learning method proposed in~\cite{xie2021towards, xie2023communication} without the event-triggering mechanism.
\end{itemize} 

In all algorithms, agents interact over a $(7,0)$-redundant communication network $\Gcal=(\Vcal, \Ecal)$ constructed according to~\Cref{prop:construction}, which is also $4$-robust (see~\Cref{remark:relationship_w_robust}). Although the simulations are performed on a fixed graph as the Baseline only supports static networks, note our method naturally generalizes to time-varying graphs. We assume the network is under $F=1$-total Byzantine edge attack. Under this attack model, \Cref{thm:main} guarantees that all agents almost surely converge to the optimal value functions using our FRQD-learning algorithm. In contrast, Baseline only ensures convergence to a neighborhood of the optimal value functions~\cite[Thm.~1]{xie2023communication}.

To simulate a Byzantine edge attack, at each communication round we randomly select $F=1$ edge and designate it as adversarial, with its transmitted information being corrupted. Specifically for our algorithm, during the first communication round (line 3 of~\Cref{alg:fully_resilient}), the adversary replaces the transmitted message with an extreme Q-value for agent $0$, namely $(10000,0)$. During the second communication round (line 7), the adversary injects the falsified set $\tilde{\Kcal} = \{(10000,i)\}_{i \in \mathcal V}$. In contrast, when simulating the Baseline, agents receiving information via the compromised edge simply receive the value $10000$ in place of legitimate neighbor values.

\begin{figure}
    \centering
    \includegraphics[width=1\linewidth]{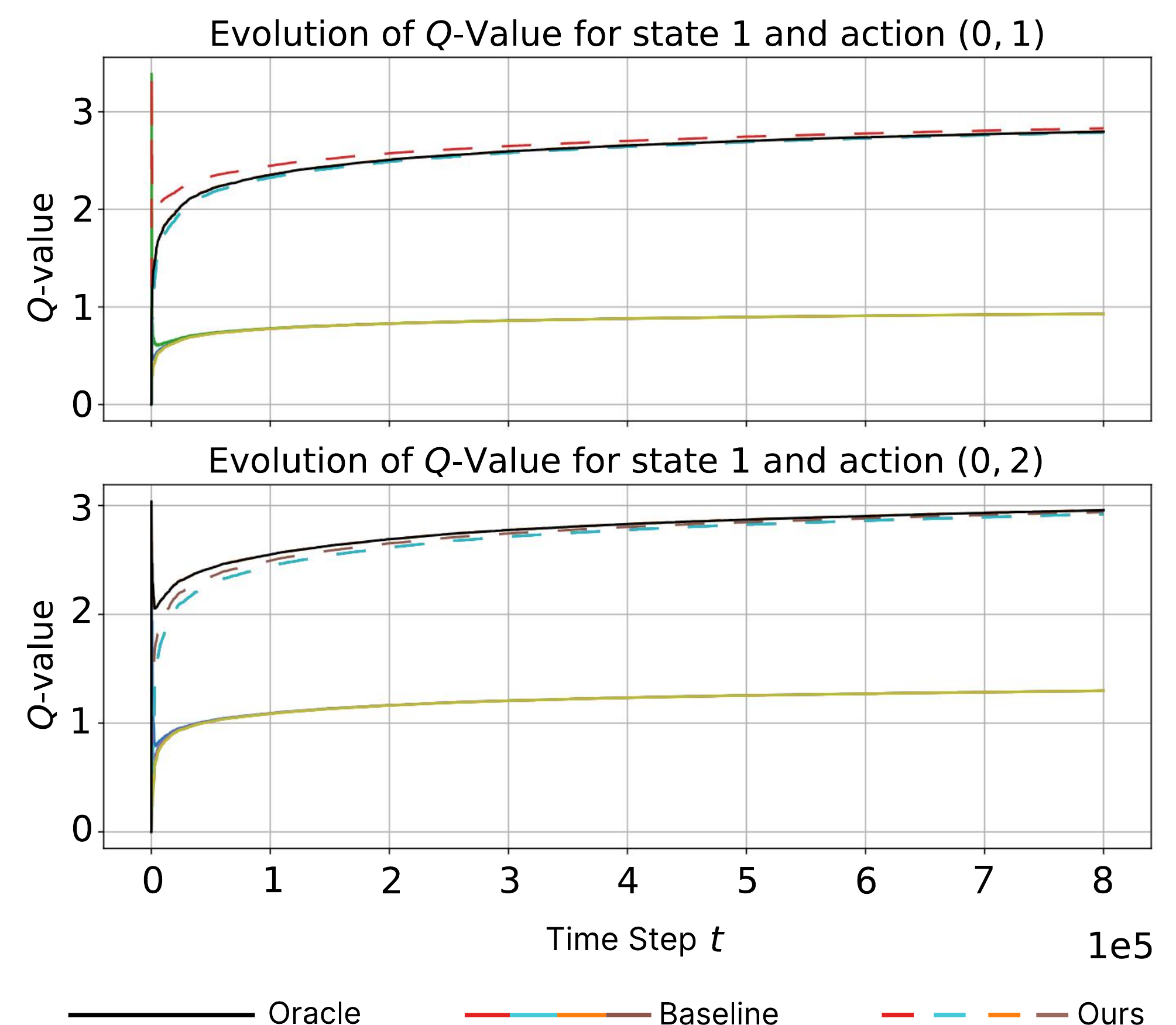}
    \caption{Performance comparison of our algorithm against the optimal Q-values from the Oracle~\cite{kar2013QD} (no attack) and the Baseline~\cite{xie2021towards} under a $1$-total Byzantine edge attack.}
    \label{fig:simulation}
\end{figure}

\begin{table}[t]
\small
\setlength{\tabcolsep}{3pt} 
\centering
\caption{Optimal policies from each algorithm across states.}
\label{tab:optimal_policy_transposed}
\begin{tabular}{c|>{\columncolor{gray!20}}c c c>{\columncolor{gray!20}}c >{\columncolor{gray!20}}c >{\columncolor{gray!20}}c}
\toprule
 & $x=1$ & $x=2$ & $x=3$ & $x=4$ & $x=5$ & $x=6$ \\
\midrule
Oracle & $(0,1)$ & $(1,2)$ & $(3,6)$ & $(0,3)$ & $(0,4)$  & $(2,3)$ \\ \hline
Baseline & $(1,8)$ & $(1,2)$ & $(3,6)$ & $(0,5)$ & $(1,8)$  & $(0,8)$ \\\hline
Ours & $(0,1)$ & $(1,2)$ & $(3,6)$ & $(0,3)$ & $(0,4)$ & $(2,3)$  \\
\bottomrule
\end{tabular}
\end{table}

Each agent in each algorithm initializes its Q-values randomly in the interval $[0,50]$. The evolution of their Q-values for state (task) $1$ and actions (robot-pairs) $(0,1)$ and $(0,2)$ is shown in~\Cref{fig:simulation}. Black solid lines correspond to the optimal Q-values computed by the Oracle, solid colored lines to the Baseline, and dashed lines to our method. The results show that under our method, all agents converge almost surely to the true optimal Q-values, whereas the Baseline fails to do so. Consequently, as highlighted in gray in~\Cref{tab:optimal_policy_transposed}, all agents under our method identify the true optimal policies for all states (given by the Oracle), whereas agents under the Baseline fail to do so for states $x=1,4,5,6$.

\section{Conclusion}

\label{sec:conc}
We proposed a Fully Resilient QD-learning (FRQD-learning) algorithm that achieves almost sure convergence to the optimal value functions in distributed multi-agent Q-learning under Byzantine edge attacks. By leveraging a redundancy-based filtering mechanism and the novel topological notion of $(r,r')$-redundancy, our method ensures the induced communication structure remains undirected, thereby recovering the exact convergence properties of the original QD-learning algorithm. For future work, we aim to extend our framework to actor-critic architectures.






\bibliographystyle{IEEEtran}
\bibliography{references_ll}

\end{document}